\newtheorem{theorem}{Theorem}
\newtheorem{solution}{Solution}
\newcommand{\linebreakand}{%
  \end{@IEEEauthorhalign}
  \hfill\mbox{}\par
  \mbox{}\hfill\begin{@IEEEauthorhalign}
}
\def\BibTeX{{\rm B\kern-.05em{\sc i\kern-.025em b}\kern-.08em
    T\kern-.1667em\lower.7ex\hbox{E}\kern-.125emX}}
\begin{document}

\title{Communication-Efficient Collaborative LLM Inference via Distributed Speculative Decoding\\
}

\author{
\IEEEauthorblockN{1\textsuperscript{st} Ce ZHENG}
\IEEEauthorblockA{\textit{Department of Network Intelligence} \\
\textit{Pengcheng Laboratory}\\
Shenzhen, China \\
ce.zheng@pcl.ac.cn}
\and
\IEEEauthorblockN{2\textsuperscript{nd} Tingting YANG}
\IEEEauthorblockA{\textit{Department of Network Intelligence} \\
\textit{Pengcheng Laboratory}\\
Shenzhen, China \\
yangtt@pcl.ac.cn}
}

\maketitle

\begin{abstract}
Speculative decoding is an emerging technique that accelerates large language model (LLM) inference by allowing a smaller “draft” model to predict multiple tokens in advance, which are then verified or corrected by a larger “target” model. In AI-native radio access networks (AI-RAN), this paradigm is well-suited for collaborative inference between resource-constrained end devices and more capable edge servers or base stations (BSs). However, existing distributed speculative decoding requires transmitting the full vocabulary probability distribution from the draft model on the device to the target model at the BS, which leads to prohibitive uplink communication overhead. To address this issue, we propose a ``Top-$K$ Sparse Logits Transmission (TK-SLT)'' scheme, where the draft model transmits only the top-$K$ token raw probabilities and the corresponding token indices instead of the entire distribution. This approach significantly reduces bandwidth consumption while maintaining inference performance. We further derive an analytical expression for the optimal draft length that maximizes inference throughput, and provide a theoretical analysis of the achievable speedup ratio under TK-SLT. Experimental results validate both the efficiency and effectiveness of the proposed method.

\end{abstract}

\begin{IEEEkeywords}
AI-RAN, Distributed Speculative Decoding, Collaborative Inference, Top-K
\end{IEEEkeywords}

\section{Introduction}


The advent of \textbf{large language models (LLMs)} marks a paradigm shift in artificial intelligence, enabling applications from generative interfaces to autonomous coding assistants. Yet their deployment remains challenging across cloud, edge, and AI-native radio access networks (AI-RANs). Edge devices face tight memory, energy, and compute limits, while cloud inference suffers from latency, jitter, and mobility-induced disconnections. In AI-RANs, where communication and computation are jointly managed, efficient resource orchestration is essential to deliver reliable, low-latency LLM services.

To address these challenges, researchers have proposed a collaborative edge-device architecture that strategically deploys a small language model (SLM) on the device while offloading the large language model (LLM) to a base station (BS) or edge server~\cite{ding2024hybrid, hao2024hybrid}. In~\cite{ding2024hybrid}, a router trained to predict query difficulty and desired quality level enables cost-efficient assignment of queries to either the small or large model. In~\cite{hao2024hybrid}, a cost-aware draft-verification approach was employed. By tuning a predefined threshold $p_t$ for the generated token probability, a controllable performance-cost trade-off was achieved.


Previous approaches typically enhanced efficiency at the expense of inference accuracy. To address this, \textbf{speculative decoding (SD)} has been proposed, in which a lightweight \emph{draft} model autoregressively predicts $\gamma$ candidate tokens, and a larger \emph{target} model validates them in parallel~\cite{leviathan2023fast, chen2023accelerating}. This mechanism alleviates the inefficiency of sequential token generation while maintaining output quality. Building on this idea, \textbf{distributed speculative decoding (DSD)} was introduced, where the draft model operates on the device and the target model performs verification at BS~\cite{zhao2024edge,oh2024uncertainty,ning2025dssd}. For instance,~\cite{zhao2024edge} explores optimizing the number of tokens produced by the smaller model to jointly reduce delay and energy consumption under uplink and downlink transmission constraints.

Nevertheless, this hybrid deployment is hindered by communication overhead: each token requires transmitting its full probability distribution to the BS for verification, leading to a payload that grows linearly with vocabulary size. For instance, a 32k vocabulary with FP16 can incur about 500 kbit per token. To mitigate this issue,~\cite{oh2024uncertainty} introduced an uncertainty-aware scheme that omits uplink transmission for tokens with high acceptance likelihood at the BS. While this improves throughput, it introduces extra uncertainty estimation overhead and degrades inference accuracy.

This calls for a communication-efficient solution without compromising inference quality. We propose a scheme that applies top-$K$ sampling at the SLM so that only the logits of the top-$K$ candidates are transmitted to the BS, substantially reducing communication overhead while preserving exact equivalence of the resulting inference distribution to that of a standalone LLM. We term this scheme ``Top-$K$ Sparse Logits Transmission (TK-SLT)''.

The rest of the paper is organized as follows: Section~\ref{sec:System model} describes the system model. 
Section~\ref{sec:top-K} presents our TK-SLT scheme.
Section~\ref{sec:theoretical_analysis} provides theoretical analysis, including the Speedup ratio and the closed-form expression for the optimal draft token length. Section~\ref{sec:experiment} reports our numerical analysis and simulation results. Section~\ref{sec:conclusion} concludes the paper..

\section{System Model}\label{sec:System model}
We consider the distributed speculative decoding (DSD) framework, where a lightweight small language model (SLM) is deployed on the device, while a large language model (LLM) is hosted at the BS~\cite{zhao2024edge, oh2024uncertainty}. Both SLM and LLM are assumed to share a common vocabulary $\mathcal{V}$, which includes all possible tokens.


\begin{figure*}[htbp]
\begin{center}
\centerline{\includegraphics[width=0.75\textwidth]{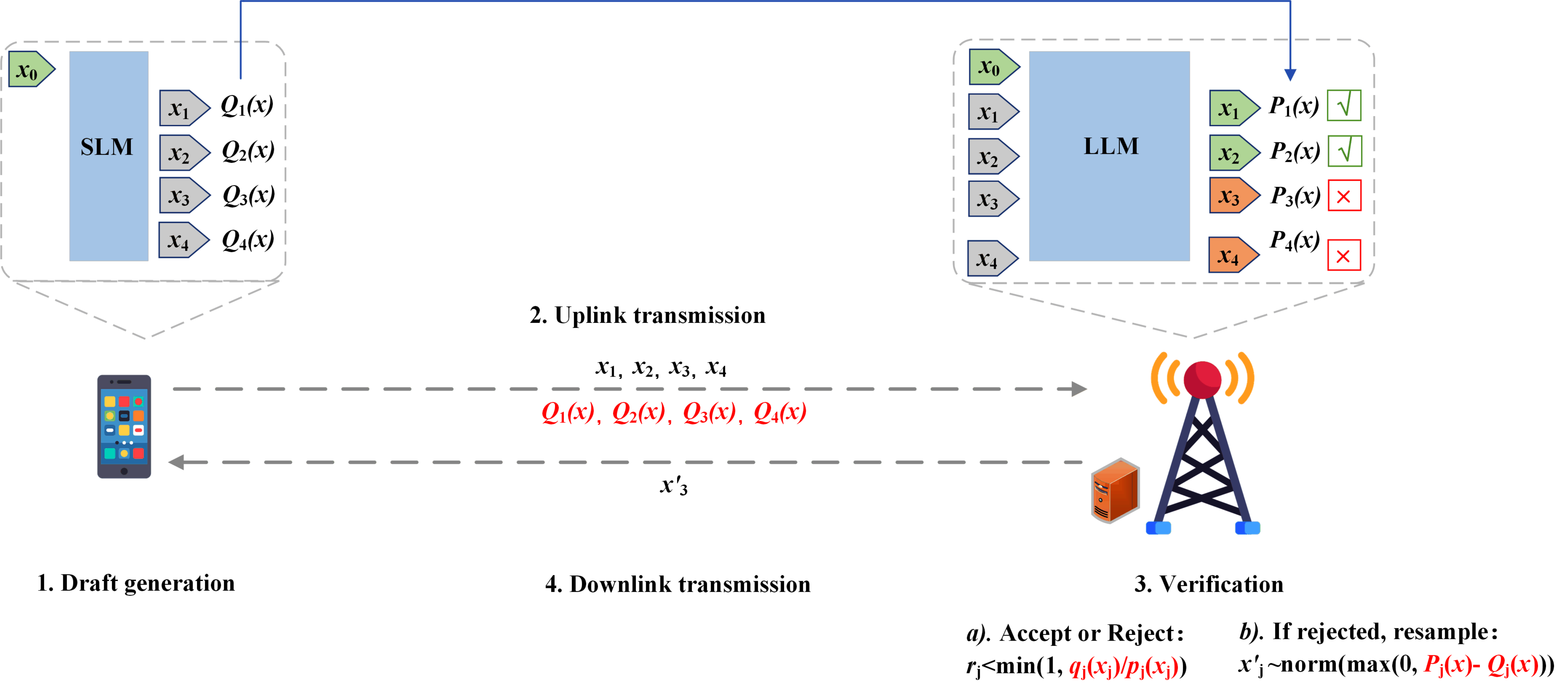}}
\caption{Distributed Speculative Decoding}
\label{fig:DSD}
\end{center}
\end{figure*}

\subsection{Distributed Speculative Decoding}
\label{sec:DSD}

As illustrated in Fig.~\ref{fig:DSD}: SLM generates $\gamma$ draft tokens autoregressively. And LLM verifies them in parallel, accepting valid tokens and resampling new ones for rejected tokens:

\noindent\textbf{1.~Draft Process (on Deivce)}: \\
SLM generates $\gamma$ tokens based on the \textit{prefix}. Specifically,, for the $i$-th token, SLM first gets a vocabulary probability distribution, denoted as $Q_i(x)$ and then samples $x_i$ according to $Q_i(x)$, i.e., $x_i \sim Q_i(x)$.\footnote{$Q_i(x)$ and $P_i(x)$ are vectors with the same dimension as the vocabulary, i.e., $|\mathcal{V}|$.}

\noindent\textbf{2.~Uplink Transmission}: \\
The device sends the indices of the draft tokens and vocabulary probability distributions to the BS.\footnote{Instead of transmitting full token strings, the device sends only the indices of draft tokens from the vocabulary, reducing communication overhead. For the sake of better illustration, however, tokens are consistently used in Fig.~\ref{fig:DSD}.}

\noindent\textbf{3.~Verification Process (at the BS)}: \\
LLM first obtains the $\gamma+1$ distributions based on the prefix and received tokens: $P_j(x), j=1,\cdots,\gamma+1$. It then verifies these received tokens as follows: \\
\textbf{\textit{a).~Accept/Reject:}} Let $q_j(x_j)$ and $p_j(x_j)$ denote the probability values of token $x_j$ in $Q_j(x)$ and $P_j(x)$. If $p_j(x_j)<q_j(x_j)$, $x_j$ is accepted as the $j$-th token. Otherwise, it is rejected with probability of $1-q_j(x_j)/p_j(x_j)$. \\
\textbf{\textit{b).~Resample:}} Once rejected, it resamples a new token $x'_j$ from an adjusted distribution $P'_j(x)=\mathrm{norm} \left(\max \left\{ 0, P_j(x)-Q_j(x) \right\}\right)$. 
If all $\gamma$ received tokens are accepted, it samples the $(\gamma+1)$-\textit{th} token $x_{\gamma+1} \sim P_{\gamma+1}(x)$.

\noindent\textbf{4.~Downlink Transmission}: \\
BS sends the results $x_j$ and $j$ back to the device, where $j$ denotes the position of the resampled token in the sequence if there is a rejection or equals $\gamma+1$ if all draft tokens are accepted.

\subsection{Wireless Communication}
The communication time consists of the uplink transmission time and downlink transmission time:
\begin{equation}
    T_{comm} = T_{up} + T_{down}.
\end{equation}
where
$T_{up} = \frac{D_{up}} {R_{up}}$ and $T_{down} = \frac{D_{down}}{R_{down}}$
are the amount of data transmitted in uplink and downlink, and $R_{up}$ and $R_{down}$ are the transmission rate.

Given that the index size is insignificant relative to the vocabulary distribution, our analysis considers only the uplink transmission latency associated with the vocabulary distribution. And
\begin{align}
    \label{eq:Vocabulary}
    D_{up} = \gamma \cdot D_{\mathcal{V}}, ~~~D_{\mathcal{V}} = |\mathcal{V}| \cdot b_{prob},
\end{align}
where $D_{\mathcal{V}}$ is the amount of data for a single vocabulary distribution, $|\cdot|$ denotes the cardinality, and $b_{prob}$ represents the bit-width of each probability value, e.g. $b_{prob}=32$ bits for full precision or 16 bits for half precision. 

Hence, we have
\begin{equation}
    T_{comm} = \frac{D_{up}} {R_{up}} = \gamma \cdot T_{\mathcal{V}},
\end{equation}
where 
$T_{\mathcal{V}} =  \frac{|\mathcal{V}| \cdot b_{prob}} {R_{up}}$ 
denotes the time for a single vocabulary distribution transmission.

\subsection{Wall-clock Time}
Inference latency comprises three parts: on-device SLM drafting time, edge-side LLM verification time, and device–edge communication time.

In a single run of the Draft-Verify process, the inference latency is:
\begin{equation}
\label{eq:T_inf}
\begin{aligned}
    T_{inf} &= \gamma \cdot T_{SLM} + T_{comm} + T_{LLM} \\
    &= \gamma \cdot (T_{SLM}+T_{\mathcal{V}}) + T_{LLM}
\end{aligned}
\end{equation}
where $T_{SLM}$ and $T_{LLM}$ denote the time for a single run of SLM and LLM, respectively. 

Let
\begin{align}
    b = T_{\mathcal{V}}/T_{LLM}, ~~~~~
    c = T_{SLM}/T_{LLM}.
\end{align}
Thus, we define
\begin{align}
    L = b +c,
\end{align}
representing the relative cost of vocabulary transmission and SLM inference with respect to one LLM run, which will be incorporated into the speculative decoding latency together with the LLM verification cost.
And \eqref{eq:T_inf} is rephrased as:
\begin{equation}
\label{eq:T_inf_bc}
    T_{inf} = [1+\gamma L] T_{LLM}.
\end{equation}

To realize acceleration, the per-token inference latency of the small model, combined with its transmission overhead, must remain below the per-token inference latency of the large model. Therefore, we have $L<1$.

\section{TK-SLT Scheme}
\label{sec:top-K}
As is shown in \eqref{eq:Vocabulary}, token generation involves uploading a distribution whose payload size is proportional to the dimensionality of the vocabulary space.  Such excessive uplink transmission leads to prohibitive communication latency, thereby reducing overall inference throughput. To reduce transmission cost, we propose our solution, the \textbf{Top-$\boldsymbol{K}$ Sparse Logits Transmission (TK-TRPT)} scheme. 

The underlying intuition can be articulated as follows: provided that a draft token undergoes the verification process outlined in Sec.~\ref{sec:DSD}, DSD guarantees consistency with a standalone LLM. In particular, the resulting token probability distribution produced by DSD is theoretically equivalent to that of the original LLM, irrespective of the specific form or variation of $Q(x)$, the output distribution from SLM. Moreover, if $Q(x)$ is sufficiently sparse, the associated communication cost can be reduced significantly, without affecting the probabilistic equivalence with the original LLM.

\begin{solution}[\textbf{Top-$\boldsymbol{K}$ Sparse Logits Transmission, TK-SLT}]
The SLM applies softmax only to the $K$ largest logits, and these $K$ logits along with their corresponding token IDs are transmitted to the LLM for verification.
\end{solution}

In this scheme, only the top-$K$ logits are retained, resulting in a sparse representation. By transmitting this sparse set rather than the full logits, the communication overhead is significantly reduced, while the most probable candidate tokens are preserved for verification by the LLM. Although it still outputs a token probability distribution identical to that of the standalone LLM, the acceptance rate is modified. 

According to~\cite{leviathan2023fast}, the acceptance rate is
\begin{equation}
\alpha = E \left[\beta_i\right],~~~ \beta_i = \sum_x \min \left\{ q_i(x), p_i(x) \right\},
\end{equation}
where $Q_i(x)$ and $q_i(x)$ denote the distribution obtained from standard sampling (temperature=1). 

When Top-$K$ sampling method is employed, the acceptance rate becomes
\begin{equation}
\alpha^{(K)}=E \left[\beta_i^{(K)}\right] , ~~~ \beta_i^{(K)} = \sum_x \min \left\{ y^{(K)}_i(x), p_i(x) \right\},
\end{equation}
where $Y^{(K)}_i(x)$ and $y^{(K)}_i(x)$ denote the distribution obtained from Top-$K$ sampling.
The change in $Y^{(K)}_i(x)$ relative to $Q_i(x)$ is not deterministic and can either increase or decrease, depending on the relationship between $P_i(x)$ and $Q_i(x)$. Specifically:
\begin{itemize}
    \item If the top-$K$ points of \(Q_i(x)\) (i.e., those with the highest \(q_i(x)\) values) correspond to high values of \(P_i(x)\), then \(\alpha^{(K)}\) may increase because \(Y_i(x)\) concentrates probability mass on points where \(\min\left\{y_i^{(K)}(x),p_i(x)\right\}\) is large.
    \item Conversely, if the top-$K$ points of \(Q_i(x)\) correspond to low values of \(P_i(x)\), then \(\alpha^{(K)}\) may decrease because \(\min\left\{y^{(K)}_i(x), p_i(x)\right\}\) is small despite the concentration of mass.
\end{itemize}

Thus, the transformation of the value $\alpha$ to $\alpha^{(K)}$ following a top-$K$ operation on the distribution $Q_i(x)$ is not monotonic. The direction and magnitude of change are contingent upon the intrinsic relationship between the two probability distributions $P(x)$ and $Q_i(x)$. The variation of $\alpha$ has a significant impact on the speedup ratio, which will be analyzed in more detail in Section~\ref{sec:speedup}.

\section{Theoretical Analysis}
\label{sec:theoretical_analysis}
In this section, we introduce the speedup ratio and derive a closed-form characterization of the optimal draft length, which involves the Lambert $W$ function.
\subsection{Speedup ratio}
\label{sec:speedup}
To better analyze the system performance, we define the speedup ratio--$S_{\text{inf}}$ 
as follows:
\begin{equation}
    S_{\text{inf}} = \frac{\Theta_{\text{DSD}}}{\Theta_{\text{LLM}}},
\end{equation}
where $\Theta_{\text{DSD}}$ and $\Theta_{\text{LLM}}$ denote the inference throughput of distributed speculative decoding and that of standalone LLM, respectively. 

\begin{theorem}
The speedup ratio of DSD is 
\begin{equation}
    S_{ \text{inf} }(\gamma) = \frac{1-\alpha^{\gamma+1}}{[1+\gamma L](1-\alpha)},
\end{equation}
where $L = b+c$.
\end{theorem}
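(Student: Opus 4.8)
The plan is to compute the two throughputs separately and take their ratio. For a standalone LLM, each run of duration $T_{LLM}$ commits exactly one token, so $\Theta_{\text{LLM}} = 1/T_{LLM}$. For DSD, a single Draft--Verify cycle occupies the wall-clock time $T_{inf} = [1+\gamma L]\,T_{LLM}$ established in~\eqref{eq:T_inf_bc}, but commits a \emph{random} number $N$ of tokens. Hence $\Theta_{\text{DSD}} = E[N]/T_{inf}$, and forming the ratio cancels the common $T_{LLM}$ factor, leaving $S_{\text{inf}}(\gamma) = E[N]/(1+\gamma L)$. The entire argument therefore reduces to evaluating $E[N]$.

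To obtain $E[N]$, I would invoke the accept/reject rule of Section~\ref{sec:DSD}: each of the $\gamma$ draft tokens is accepted with probability equal to the acceptance rate $\alpha$, and the cycle commits tokens up to and including the first rejected position, where a corrected token is resampled. Concretely, if the first rejection occurs at position $n+1 \le \gamma$, the cycle yields $n+1$ tokens; if all $\gamma$ draft tokens are accepted, the LLM additionally samples the bonus token $x_{\gamma+1}\sim P_{\gamma+1}(x)$, yielding $\gamma+1$ tokens. The cleanest route is the tail-sum identity $E[N]=\sum_{n\ge 1}P(N\ge n)$: since the event $\{N\ge n\}$ requires the first $n-1$ draft tokens all to be accepted, one has $P(N\ge n)=\alpha^{\,n-1}$ for $n=1,\dots,\gamma+1$ and $P(N\ge n)=0$ thereafter. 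Summing the resulting finite geometric series gives
\begin{equation}
E[N]=\sum_{n=0}^{\gamma}\alpha^{n}=\frac{1-\alpha^{\gamma+1}}{1-\alpha}.
\end{equation}
Substituting this into $S_{\text{inf}}(\gamma) = E[N]/(1+\gamma L)$ reproduces the claimed expression immediately.

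The main obstacle is justifying that the per-position acceptances behave as independent Bernoulli$(\alpha)$ trials, so that the geometric tail $P(N\ge n)=\alpha^{\,n-1}$ is exact; this follows from the speculative-sampling construction of~\cite{leviathan2023fast}, in which $\alpha$ is precisely the single-token acceptance probability and a rejection terminates the accepted run. A second point requiring care is the bonus-token case in which all $\gamma$ draft tokens are accepted: it is exactly this extra committed token that extends the geometric sum to length $\gamma+1$ rather than $\gamma$. Omitting it would replace the numerator by $1-\alpha^{\gamma}$ and yield the wrong constant, so I would treat that boundary case explicitly.
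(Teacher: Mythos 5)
Your proposal is correct and follows essentially the same route as the paper: both compute $\Theta_{\text{LLM}} = 1/T_{\text{LLM}}$, use $T_{inf} = [1+\gamma L]T_{\text{LLM}}$ from~\eqref{eq:T_inf_bc} for the DSD cycle time, and divide the expected number of committed tokens $\frac{1-\alpha^{\gamma+1}}{1-\alpha}$ by that time. The only difference is that the paper simply cites~\cite{leviathan2023fast} for the expected-token formula, whereas you rederive it via the tail-sum identity $E[N]=\sum_{n\ge 1}P(N\ge n)$ with geometric tails (correctly including the bonus token), which makes your write-up self-contained but does not change the argument.
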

\begin{proof}
According to \cite{leviathan2023fast}, the expected token generated by DSD is
\begin{equation}
E(\#\text{tokens}) = \frac{1-\alpha^{\gamma+1}}{1-\alpha},
\end{equation}
where $\alpha$ is the expected acceptance rate.

Combining with \eqref{eq:T_inf_bc}, we have
\begin{equation}
    \Theta_{\text{DSD}} = \frac{1-\alpha^{\gamma+1}}{1-\alpha}\cdot \frac{1}{[1+\gamma L] T_{LLM}}.
\end{equation}

Since the inference throughput of standalone LLM is the reciprocal of $T_{LLM}$:
\begin{equation}
    \Theta_{\text{LLM}} = \frac{1}{T_{\text{LLM}}},
\end{equation}
we have 
\begin{equation}
    S_{\text{inf}}(\gamma) = \frac{1-\alpha^{\gamma+1}}{[1+\gamma L](1-\alpha)}.
\end{equation}
\end{proof}

\subsection{Optimal draft token length}
The optimal draft token length problem is formulated as:
\begin{align}
    \label{eq:max_S}
    \max_{\substack{\gamma}}~~& S_{\text{inf}}\\
    s.t.~~& 0<\alpha<1\\
    & \label{eq:gamma_cond}\gamma \in \mathbb{Z}, \gamma > 1\\
    & \label{eq:L} b+c<1\\
    & b>0,~~ c>0
\end{align} 
\begin{theorem}
\label{theorem:gamma}
The optimal draft length---$\gamma^{*}$ in \eqref{eq:max_S} takes $\lceil \gamma_0 \rceil$ or $\lfloor \gamma_0 \rfloor$ if $\gamma_0>1$, and otherwise, $\gamma^{*}=1$, where $\lceil \cdot \rceil$ and $\lfloor \cdot \rfloor$ are the ceiling function and floor function, respectively. And
\begin{align}
\label{eq:gamma_0}
\gamma_0 =
\frac{1}{ \ln \alpha } \left[W_{-1}\left( -\frac{1}{e}\alpha^{\frac{1}{L}-1} \right) + 1 \right] - \frac{1}{L},
\end{align}
where $W_{-1}(\cdot)$ denotes the \textbf{Lambert~$\boldsymbol{W}$} function ($-1$ branch), and $L=b+c$.
\end{theorem}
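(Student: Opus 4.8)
The plan is to relax the integrality constraint \eqref{eq:gamma_cond}, treat $S_{\text{inf}}$ as a smooth function of a real variable $\gamma\ge 1$, locate its unique interior stationary point $\gamma_0$, and then argue that the best admissible integer must be one of the two neighbors $\lfloor\gamma_0\rfloor,\lceil\gamma_0\rceil$. Since the prefactor $1/(1-\alpha)$ in $S_{\text{inf}}$ is a positive constant independent of $\gamma$, maximizing it is equivalent to maximizing $f(\gamma)=(1-\alpha^{\gamma+1})/(1+\gamma L)$, so I would work with $f$ throughout.

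First I would differentiate. Writing $\beta=\ln\alpha<0$ (since $0<\alpha<1$), the quotient rule yields a numerator $N(\gamma)=-\beta\,\alpha^{\gamma+1}(1+\gamma L)-L\,(1-\alpha^{\gamma+1})$, and the stationarity condition $N(\gamma)=0$ rearranges to \[ \alpha^{\gamma+1}\big[L-\beta(1+\gamma L)\big]=L. \] The hard part will be coercing this transcendental equation—which mixes the exponential $\alpha^{\gamma+1}=e^{\beta(\gamma+1)}$ with the linear factor $1+\gamma L$—into the canonical Lambert form $u\,e^{u}=z$. The key move is the substitution $u=\beta\!\left(\gamma+\tfrac{1}{L}\right)-1$, which absorbs the linear term into the exponent: a short computation shows the bracket collapses to $L-\beta(1+\gamma L)=-Lu$, so the equation becomes $e^{\beta(\gamma+1)}u=-1$, and rewriting the exponent as $\beta(\gamma+1)=u+\beta(1-\tfrac1L)+1$ gives \[ u\,e^{u}=-\tfrac{1}{e}\,\alpha^{1/L-1}, \] which matches the argument $z$ of \eqref{eq:gamma_0}. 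Verifying that this bookkeeping leaves no spurious factor is the crux of the algebra.

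Next I would select the branch. For $0<\alpha<1$ and $0<L<1$ one has $1/L-1>0$, hence $z=-\tfrac1e\alpha^{1/L-1}\in(-1/e,0)$, the interval on which the Lambert $W$ has exactly the two real branches $W_0\in(-1,0)$ and $W_{-1}\in(-\infty,-1)$. Because $\beta<0$ and $\gamma+1/L>0$, the substituted variable obeys $u=\beta(\gamma+1/L)-1<-1$, which forces the $W_{-1}$ branch. Inverting $u=W_{-1}(z)$ and solving $u=\beta(\gamma_0+1/L)-1$ for $\gamma_0$ then yields $\gamma_0=\tfrac{1}{\ln\alpha}\big[W_{-1}(z)+1\big]-\tfrac1L$, exactly \eqref{eq:gamma_0}.

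Finally I would settle integrality. I would establish that $f$ is unimodal on $\gamma\ge0$: one has $f(0)=1-\alpha>0$ and $f(\gamma)\to 0$ as $\gamma\to\infty$, while $N(\gamma)\to -L<0$ at infinity (the exponential term vanishing against the linear one), so any interior stationary point is necessarily a maximum. Single-crossing of $N$—equivalently, that the strictly decreasing exponential $\alpha^{\gamma+1}$ meets the decreasing rational $L/[L-\beta(1+\gamma L)]$ exactly once—gives a single peak at $\gamma_0$. Unimodality then implies the optimal feasible integer is whichever of $\lfloor\gamma_0\rfloor,\lceil\gamma_0\rceil$ gives the larger $S_{\text{inf}}$; and when $\gamma_0\le 1$ the function is already decreasing over the feasible region, so the optimum is pushed to the lower boundary, $\gamma^{*}=1$. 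Proving strict single-crossing rigorously is the one remaining subtlety beyond the Lambert manipulation, though for the paper's purposes it can be read off from the sign behavior of $N$.
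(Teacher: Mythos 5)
Your proposal is correct and follows essentially the same route as the paper: the same stationarity equation, and your single substitution $u=\beta\left(\gamma+\tfrac{1}{L}\right)-1$ is exactly the composition of the paper's two substitutions ($\eta=\gamma-\tfrac{1}{k}+\tfrac{1}{L}$ followed by $w=k\eta$), leading to the same Lambert equation $we^{w}=-\tfrac{1}{e}\alpha^{1/L-1}$ and the same branch $W_{-1}$. Your branch selection is slightly cleaner: you observe directly that $u=\beta(\gamma+\tfrac{1}{L})-1<-1$ for any admissible $\gamma$, which forces the $W_{-1}$ branch, whereas the paper rules out $W_0$ by deriving the contradiction $\gamma\le-\tfrac{1}{L}$; both are valid. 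Your one genuine addition is the unimodality discussion: the paper never argues that the critical point is a maximum, nor that the optimal integer must be one of $\lfloor\gamma_0\rfloor,\lceil\gamma_0\rceil$, while you supply the missing logic. The single-crossing property you flag as the remaining subtlety is in fact easy to close: writing $g(\gamma)=\alpha^{\gamma+1}\left[L-\beta(1+\gamma L)\right]$, one computes $g'(\gamma)=-\beta^{2}\alpha^{\gamma+1}(1+\gamma L)<0$, so $g$ is strictly decreasing and the derivative's numerator $N(\gamma)=g(\gamma)-L$ changes sign at most once, from positive to negative; this gives unimodality of $f$ and makes the integrality step (and the boundary case $\gamma_0\le 1\Rightarrow\gamma^{*}=1$) fully rigorous--a point on which your argument is actually more complete than the paper's.
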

\begin{proof}
\eqref{eq:max_S} is equivalent as
\begin{equation}    \max_{\substack{\gamma}}~~ T(\gamma) =\frac{1-\alpha^{\gamma+1}}{1+L\gamma}.
\end{equation}

\noindent~\textbf{Step 1: First Derivative and Critical Point}

The first derivative of $T(\gamma)$ is
\begin{equation}
    T'(\gamma) = \frac{ -\alpha^{\gamma+1} \ln \alpha \cdot (1 + L\gamma) - L\left(1 - \alpha^{\gamma+1}\right) }
{(1+L\gamma)^2}.
\end{equation}

Critical points occur when the numerator vanishes:
\begin{equation}
\label{eq:numerator_0}
    -\alpha^{\gamma+1} \ln \alpha \cdot (1 + L\gamma) - L\left(1 - \alpha^{\gamma+1}\right) = 0.
\end{equation}

\noindent~\textbf{Step 2: Variable Substitution}

Let 
\begin{equation}
\label{eq:k}
    k = \ln \alpha,
\end{equation}
where $k < 0$ since $\alpha < 1$.

And substituting \eqref{eq:k} into \eqref{eq:numerator_0}, we have
\begin{equation}
\label{eq:Numerator_0}
    \alpha^{\gamma} \left[ k(1 + L\gamma) - L \right] = - \frac{L}{\alpha}.
\end{equation}
Let
\begin{equation}
\eta = \gamma - \frac{1}{k} + \frac{1}{L}.    
\end{equation}
Then
\begin{equation}
\label{eq:gamma_sub}
\gamma = \eta + \frac{1}{k} - \frac{1}{L}.
\end{equation}
Substituting \eqref{eq:gamma_sub} into \eqref{eq:numerator_0}, we have
\begin{equation}
\label{eq:complex}
\alpha^{\eta + \frac{1}{k} - \frac{1}{L}} k \eta = -\frac{1}{\alpha}.
\end{equation}
Since $k=\ln \alpha$, we have $\alpha^{\frac{1}{k}}=e$. Substituting it into \eqref{eq:complex}, we have
\begin{equation}
\label{eq:lambert1}
    \eta e^{k\eta} = -\frac{1}{ek\alpha^{1-\frac{1}{L}}}.
\end{equation}

\noindent~\textbf{Step3: Labmert~$W$ Function Transformation}

Let 
\begin{equation}
\label{eq:w}
    w = k\eta.
\end{equation} 
We transform \eqref{eq:lambert1} into the equation:
\begin{equation}
\label{eq:labmert_W}
    w e^w = -\frac{1}{e} \alpha^{\frac{1}{L}-1}
\end{equation}

According to \eqref{eq:L}, $\frac{1}{L}-1>0$. Hence, $\alpha^{\frac{1}{L}-1}<1$, which further implies
$-\frac{1}{e} \alpha^{\frac{1}{L}-1} \in \left( -\frac{1}{e}, 0 \right)$. This places the argument in the domain where two real branches of the \textbf{Lambert~$\boldsymbol{W}$} function are well defined, i,e. $W_0(\cdot)$ and $W_{-1}(\cdot)$. Therefore, the solution to \eqref{eq:labmert_W} can be expressed as
\begin{equation}
\label{eq:Lambert_W_solution}
    w = W_{i} \left(-\tfrac{1}{e}\,\alpha^{\tfrac{1}{L}-1}\right), ~~i=0~~or~-1
\end{equation}

\noindent~\textbf{Step4: Solve for Optimal $\boldsymbol{\gamma^*}$}

Reverting to $\gamma$, we substitute \eqref{eq:k}, \eqref{eq:w} and \eqref{eq:Lambert_W_solution} into \eqref{eq:gamma_sub}:
\begin{align}
\gamma &= \frac{w}{k} + \frac{1}{k} - \frac{1}{L} \notag\\
       &\overset{(a)}{=} \frac{1}{\ln\alpha}(w+1)-\frac{1}{L} \notag\\
       &\overset{(b)}{=} \frac{1}{\ln\alpha}\left[W_{i} \left(-\tfrac{1}{e}\,\alpha^{\tfrac{1}{L}-1}\right)+1\right]-\frac{1}{L}
\end{align}
where $(a)$ is from \eqref{eq:k}, and $(b)$ is from \eqref{eq:w} and \eqref{eq:Lambert_W_solution}.

Since $W_{0}(\cdot) \in [-1, 0)$, and $\ln\alpha<0$, we have $\gamma \leq -\frac{1}{L}$. This contradicts \eqref{eq:gamma_cond}. Hence, we have
\begin{equation}
    \gamma_0=\frac{1}{\ln\alpha}\left[W_{-1} \left(-\tfrac{1}{e}\,\alpha^{\tfrac{1}{L}-1}\right)+1\right]-\frac{1}{L}
\end{equation}
Since $\gamma^*$ represents the optimal draft length in DSD, it must be an integer greater than or equal to 1, which leads to Theorem.~\ref{theorem:gamma}.
\end{proof}

\section{Practical Implementation}
\subsection{Optimal Draft Length Determination}
According to Theorem~\ref{theorem:gamma}, we provide in Algorithm~\ref{alg:gamma} the procedure for determining the optimal draft length: first compute $\gamma_0$ according to \eqref{eq:gamma_0}. Then, set $\gamma^*=1$ if $\gamma_0 < 1$. Otherwise, compute and compare $S_{\text{inf}}(\lceil \gamma_0 \rceil)$ and $S_{\text{inf}}(\lfloor \gamma_0 \rfloor)$. Finally, choose the value $\lceil \gamma_0 \rceil$ or $\lfloor \gamma_0 \rfloor$ that maximizes $S_{\text{inf}}$.

\begin{algorithm}[ht]
\caption{Optimal Draft Length Determination (ODLD)}
\label{alg:gamma}
\begin{algorithmic}
\STATE {\bfseries Input:} $\alpha$, $b$, $c$
\STATE Compute $\gamma_0 = \frac{1}{ \ln \alpha } \left[W_{-1}\left( -\frac{1}{e}\alpha^{\frac{1}{b+c}-1} \right) + 1 \right] - \frac{1}{b+c}$\\
\IF{$\gamma_0 < 1$}
\STATE $\gamma^*=1$
\ELSE 
\STATE  Compute $S_{\text{inf}}(\lceil \gamma_0 \rceil)$ and $S_{\text{inf}}(\lfloor \gamma_0 \rfloor)$, respectively.
    \IF{$S_{\text{inf}}(\lfloor \gamma_0 \rfloor) \leq S_{\text{inf}}(\lceil \gamma_0 \rceil)$}
    \STATE $\gamma^* = \lceil \gamma_0 \rceil$;\\ $S_{\text{inf}}^* = S_{\text{inf}}(\lceil \gamma_0 \rceil)$
    \ELSE 
    \STATE $\gamma^* = \lfloor \gamma_0 \rfloor$;\\
    $S_{\text{inf}}^* = S_{\text{inf}}(\lfloor \gamma_0 \rfloor)$
    \ENDIF
\ENDIF
\STATE {\bfseries Output:} $\gamma^*$, $S_{\text{inf}}^*$
\end{algorithmic}
\end{algorithm}

\subsection{Adaptive Speculative Selection}
Although the optimal draft length $\gamma^*$ can be computed according to Theorem~\ref{theorem:gamma}, it must still be ensured that DSD yields performance gains over standalone LLM inference. That is $S_{\text{inf}}(\gamma^*) >1$. Based on this consideration, we design an adaptive speculative selection mechanism, which is presented in Algorithm~\ref{alg:spec_selection}.
\begin{algorithm}[ht]
\caption{Adaptive Speculative Selection (AS²)}
\label{alg:spec_selection}
\begin{algorithmic}[1]
\STATE \textbf{Input:} $\alpha$, $b$, $c$
\STATE $\gamma^*, S_{\text{inf}}^* \gets$ Call \textbf{Algorithm 1: ODLD}($\alpha$, $b$, $c$)
\IF{$S_{\text{inf}}^*>1$}
    \STATE Employ DSD
\ELSE
    \STATE Employ standalone LLM
\ENDIF
\end{algorithmic}
\end{algorithm}


\section{Numerical Results and Simulations}
\label{sec:experiment}
\subsection{Numerical Analysis on $S_{\text{inf}}$ and $\gamma^*$}

\begin{figure*}[ht]
    \centering
    \subfloat[$\alpha = 0.4$]{%
        \includegraphics[width=0.3\textwidth]{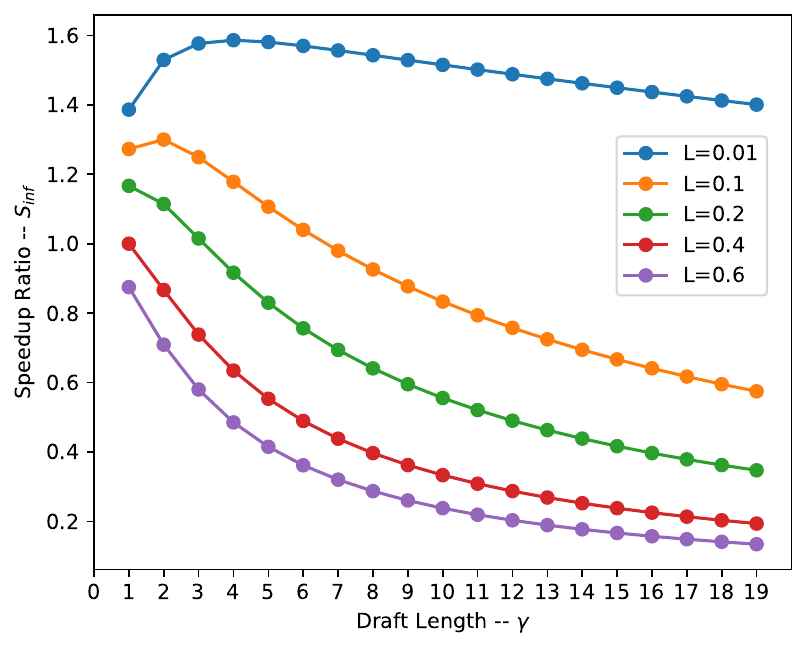}
        \label{fig:sub_a}
    }
    \hfill
    \subfloat[$\alpha = 0.6$]{%
        \includegraphics[width=0.3\textwidth]{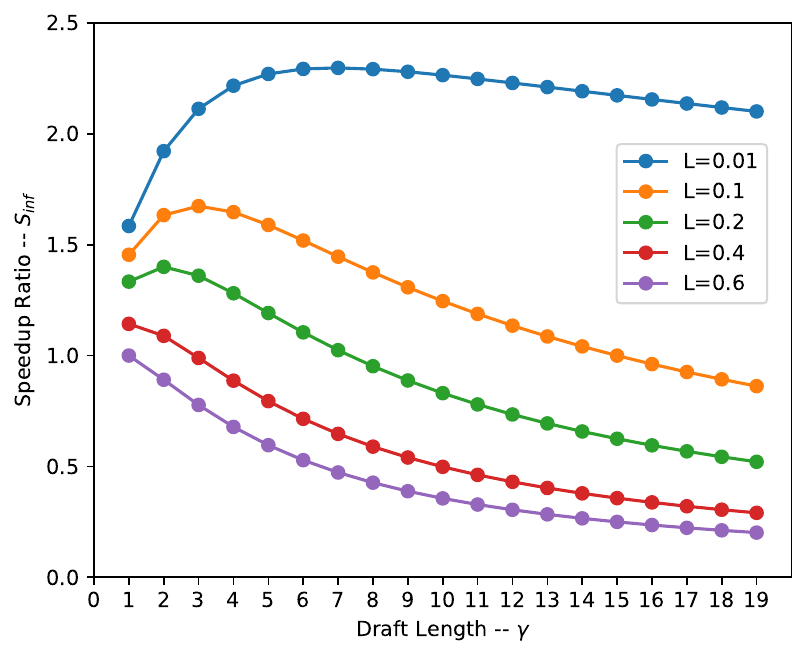}
        \label{fig:sub_b}
    }
    \hfill
    \subfloat[$\alpha = 0.8$]{%
        \includegraphics[width=0.3\textwidth]{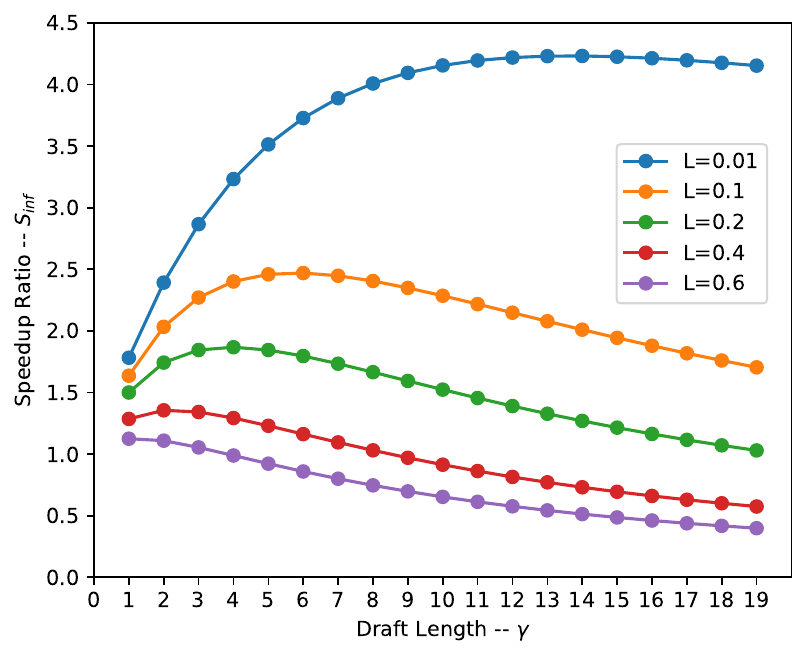}
        \label{fig:sub_c}
    }
    \caption{Effect of draft length on speedup ratio at different $L$}
    \label{fig:speedup}
\end{figure*}

The effect of draft length on speedup ratio at different $L$ is shown in Fig~\ref{fig:speedup}. It can be observed that when $\alpha$ is large and $L$ is small, there exists an optimal draft length that maximizes $S_{\text{inf}}$. This is because, at the beginning, $\alpha$ is relatively small, and thus the generated tokens are very likely to be fully accepted. However, as the draft length increases, more tokens are rejected during the verification process. Meanwhile, a longer draft length also introduces higher communication overhead and additional inference latency at the SLM. On the other hand, when $\alpha$ is small and $L$ is large, $S_{\text{inf}}$ decreases as the draft token length $\gamma$ increases. This is because, in this case, the probability of accepting draft tokens is already low, and extending the draft length only amplifies the rejection ratio. In addition, a larger $\gamma$ further aggravates the communication overhead and the SLM inference delay, leading to a monotonic degradation of the speedup ratio. It is also worth noting that in some cases $S_{\text{inf}} < 1$ when $\gamma=1$, which implies that speculative decoding is even less efficient than standalone LLM inference. In such scenarios, it is preferable to directly adopt the standalone LLM rather than involving the SLM.

Meanwhile, Table~\ref{tab:gamma_values} presents the optimal $\gamma^*$ for different values of $\alpha$ and $L$, obtained using Algorithm~\ref{alg:gamma}. The entries marked with $1^*$ indicate the cases where $S_{\text{inf}}<1$, in which standalone LLM inference should be employed. These observations are consistent with the trends illustrated in Fig.~\ref{fig:speedup}.

\begin{table}[h]
\centering
\caption{Optimal $\gamma^*$ for different values of $\alpha$ and $L$}
\label{tab:gamma_values}
\begin{tabular}{c|ccccc}
\hline
$\alpha \backslash L$ & 0.01 & 0.1 & 0.2 & 0.4 & 0.6 \\ \hline
0.4 & $4$ & $2$ & $1$ & $1$ & $1^*$\\
0.6 & $7$ & $3$ & $2$ & $1$ & $1^*$\\
0.8 & $14$ & $6$ & $4$ & $2$ & $1$\\
\hline
\end{tabular}
\end{table}



\subsection{Hardware-in-the-Loop Simulation for TK-SLT}

Two models are deployed on separate NVIDIA A800 80GB GPUs: a \textbf{68M-Llama} model for draft generation and a \textbf{7B-Llama} model for verification. The vocabulary size is configured as $32$K~\cite{touvron2023llama}. 

We set the temperature to $T =0,1$ for LLM inference, corresponding to deterministic greedy decoding, and stochastic sampling, respectively. For SLM inference, we employ top-$K$ sampling with a fixed temperature of $1$.

\begin{figure}[htbp]
\begin{center}
\centerline{\includegraphics[width=0.9\columnwidth]{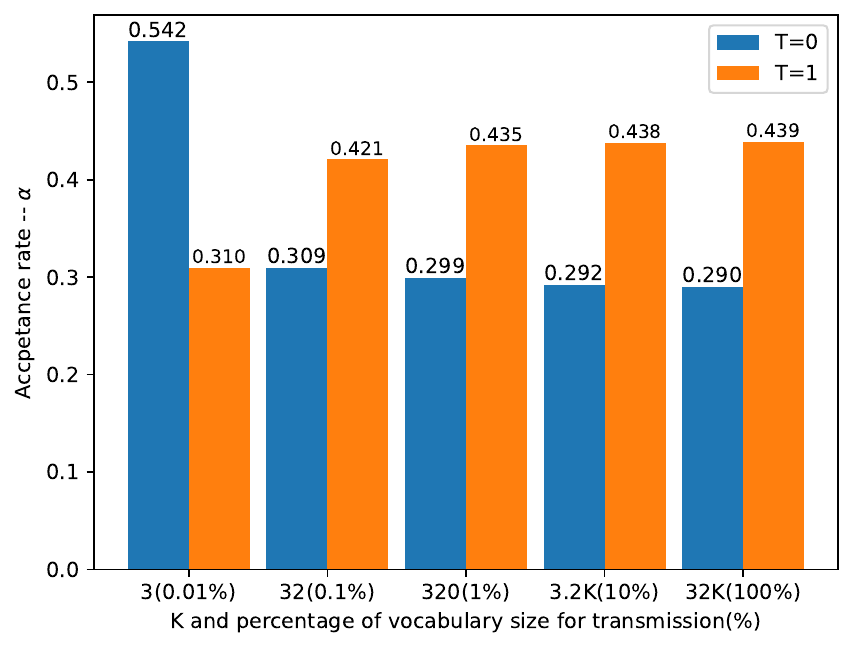}}
\caption{Acceptance rate under different $K$}
\vspace{-8mm}
\label{fig:acceptance_rate}
\end{center}
\end{figure}

The estimated $\alpha$ across various values of $K$ for top-$K$ sampling and transmission is presented in Fig.~\ref{fig:acceptance_rate}.
When the target LLM uses $T=0$, LLM inference reduces to greedy decoding, producing an extremely peaked token distribution that concentrates probability mass on the top-ranked token. Consequently, restricting SLM sampling to a small top-$K$ similarly limits the output support to high-probability tokens, yielding comparable inference behavior. In this regime, smaller $K$ values consistently lead to larger $\alpha$.
By contrast, increasing $T$ flattens the target distribution, which generally reduces the overlap between draft proposals and target-preferred tokens and therefore $\alpha$ tends to be smaller when $K$ is small. 
Empirically, we observe that the negative dependence of $\alpha$ on $K$ is strongest at $T=0$, whereas at $T=1$ the trend reverses, with $\alpha$ increasing as $K$ grows.

In addition to the empirical evaluation, we simulate a communication mode where logits are quantized to half-precision (FP16) and transmitted at $50$Mbps\cite{gsm2023estimating}. 
Through our simulation, we obtained the following estimation: $b \approx 0.07$ and $c \approx 0.23$ for full logits transmission.

Table~\ref{tab:top_K} shows that $L$ decreases markedly as the transmission cost drops from $K=32{,}000$ to $K=320$. In contrast, when $K$ is further reduced from $320$ to $3$, $L$ remains essentially constant, since the transmission cost is already negligible. The change in $K$ simultaneously induces adjustments in the optimal draft length~$\gamma^*$.

\begin{table}[ht]
\centering
\caption{$L$ and Optimal $\gamma^*$ under different $K$}
\label{tab:top_K}
\begin{tabular}{c|ccccc}
\hline
$K$ & 3 & 32 & 320 & 3200 & 32000 \\
\hline
$L$ & 0.0700 & 0.0702 & 0.0723 & 0.093 & 0.300 \\
\hline
$\gamma^* (T=0)$ & 3  & 2 & 2 & 1 & 1 \\
\hline
$\gamma^* (T=1)$ & 2  & 2 & 2 & 2 & 1  \\
\hline
\end{tabular}
\end{table}

Fig.~\ref{fig:speedup_ratio} illustrates the speedup ratio $S_{\text{inf}}$ under different values of $K$. When $T=0$, $S_{\text{inf}}$ attains its maximum since the acceptance rate $\alpha$ is highest and the transmission cost is minimal. Nevertheless, when jointly considering both $\alpha$ and the transmission overhead, the optimal speedup is achieved at $K=320$.

\begin{figure}[htbp]
\begin{center}
\centerline{\includegraphics[width=0.9\columnwidth]{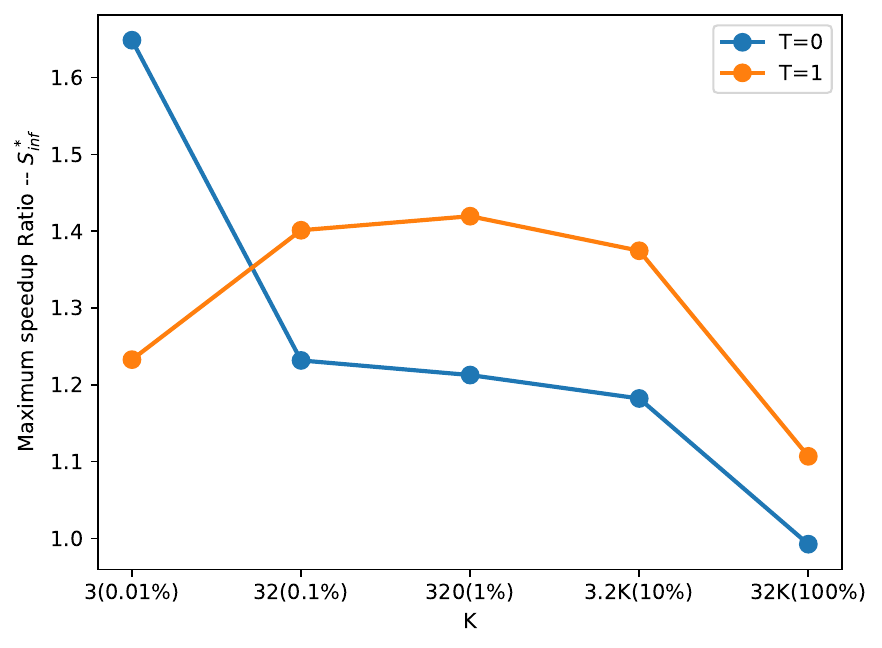}}
\caption{Speedup ratio $S_{\text{inf}}$ under different $K$}
\vspace{-6mm}
\label{fig:speedup_ratio}
\end{center}
\end{figure}

\section{Conclusion}
\label{sec:conclusion}
In this paper, we have proposed the \textbf{Top-$K$ Sparse Logits Transmission (TK-SLT)} scheme to alleviate the prohibitive communication overhead in distributed speculative decoding. By transmitting only the most probable logits instead of the entire vocabulary distribution, TK-SLT achieves substantial reduction in uplink payload size while preserving probabilistic equivalence with the standalone LLM. We have further established a rigorous theoretical framework to characterize the inference speedup, deriving the optimal draft length in closed form using the Lambert-$W$ function. The resulting algorithms---Optimal Draft Length Determination (ODLD) and Adaptive Speculative Selection (AS\textsuperscript{2})---enable practical implementation under diverse system parameters. Numerical analysis and hardware-in-the-loop experiments have confirmed the effectiveness of TK-SLT. 



\section{Acknowledge}
This work was supported by the Guangdong S\&T Programme under Grant No. 2024B0101010003, in part by Major Key Project of PCL under Grant No.PCL2025AS209, and in part by National Key Research and Development Program of China under Grant No.2024YFE0200800.

\bibliographystyle{IEEEtran}
\bibliography{main}

\end{document}